\newcommand{\lungh}{w}
\newcommand{\commento}[1] {}
\newcommand{\comment}[1] {}
\newtheorem{theorem}{Theorem}
\newtheorem{property}{Property}
\newtheorem{lemma}[theorem]{Lemma}
\newtheorem{remark}{Remark}
\newtheorem{definition}{Definition}
\newcommand{\qed}{\hfill \ensuremath{\Box}}
\newenvironment{proof}{\vspace{1ex}\noindent{\bf Proof.}\hspace{0.5em}}
	{\hfill\qed\vspace{2ex}}
\newcounter{progcount}
\newcounter{linecount}[progcount]
\begin{document}

\title{Max-flow vitality in undirected unweighted planar graphs}

\author{Giorgio Ausiello\footnote{Dipartimento di Ingegneria Informatica, Automatica e Gestionale, Universit\`a
di Roma ``La Sapienza'', via Ariosto 25, 00185 Roma, Italy. Email: \texttt{ausiello@diag.uniroma1.it}.}
\and 
Paolo G. Franciosa\footnote{Dipartimento di Scienze Statistiche, Universit\`a di Roma ``La Sapienza'',
piazzale Aldo Moro 5, 00185 Roma, Italy. Email: \texttt{paolo.franciosa@uniroma1.it, isabella.lari@uniroma1.it}.} 
\and 
Isabella Lari\footnotemark[2]
\and 
Andrea Ribichini\footnote{Dipartimento Istituto Italiano di Studi Orientali - ISO, Universit\`a
di Roma ``La Sapienza'', Circonvallazione Tiburtina 4, 00185 Roma, Italy E-mail: \texttt{ribichini@diag.uniroma1.it}.}
}

\date{}

\maketitle

\begin{abstract}
We show a fast algorithm for determining the set of edges in a planar undirected unweighted graph, whose deletion reduces the maximum flow between two fixed vertices.
This is a special case of the  \emph{max flow vitality} problem, that has been efficiently solved for general undirected graphs and $st$-planar graphs.
The \emph{vitality} of an edge of a graph with respect to the maximum flow between two fixed vertices $s$ and $t$ is defined as the reduction of the maximum flow caused by the removal of that edge.
In this paper we show that the set of edges having vitality greater than zero in a planar undirected unweighted graph with $n$ vertices can be found in $O(n \log n)$ worst-case time and $O(n)$ space.
\end{abstract}
\vspace{1cm}
\textbf{Keywords:} maximum flow, minimum cut, vitality, planar graphs, undirected graphs, fault resiliency.

\section{Introduction}\label{se:intro}
Given a graph $G$, where each edge $e$ is allowed to carry a maximum amount $c(e)$ of \emph{flow}, and given two special vertices $s,t \in G$, the maximum flow (max-flow, for short) problem consists in determining a flow assignment to each edge so that the total flow from $s$ to $t$ is maximized. The flow assignment is subject to two constraints: the capacity bound for each edge, given by $c(e)$, and the conservation constraint for each vertex $v$ other than $s$ and $t$, stating that the total flow entering into $v$ must equal the total flow exiting from $v$.

A very wide literature has been produced since the 1950's about fast algorithms for computing the maximum flow value and/or a max-flow assignment for general graphs and restricted graph classes. We refer the reader to~\cite{Ahuja} for classical results and to~\cite{KingRaoTarjan,Orlin} for recent efficient max-flow algorithms for general graphs.

In the case of  undirected planar graphs,
Reif~\cite{Reif} proposed a divide and conquer approach for computing a minimum $st$-cut (hence a maximum flow by the well known duality theorem between min-cut and max-flow) in $O(n\log^2n)$ worst-case time, where $n$ is the number of vertices in the graph. By plugging in the linear time single-source shortest path (SSSP) tree algorithm for planar graphs by Henzinger et al.~\cite{Henzinger}, this bound can be improved to $O(n\log n)$. The best currently known approach for computing a minimum $st$-cut and a max-flow assignment is due to Italiano et al.~\cite{Italiano}, and it achieves $O(n\log \log n)$ time by a two phase approach, that exploits the algorithm by Hassin and Johnson~\cite{Hassin85}, that proposed an $O(n\log^2 n)$ worst-case time for computing a maximum flow assignment by solving a planar SSSP tree problem. Also in this case, the worst-case time bound can be improved to  $O(n \log n)$ by applying the algorithm in~\cite{Henzinger}.
If $G$ is directed and planar, the divide and conquer approach by Reif for the undirected case cannot be applied: for this case, Borradaile and Klein~\cite{Borradaile} presented an $O(n\log n)$ time algorithm, based on a repeated search of left-most circulations. In the case of directed planar unweighted graphs, Eisenstat and Klein presented a linear time algorithm~\cite{KleinUnweightedStoc2013}.

For directed $st$-planar graphs, i.e., graphs admitting a planar embedding with $s$ and $t$ on a same face, Hassin~\cite{Hassin} proved that both the max-flow value and a max-flow assignment can be found by computing an SSSP tree in the dual graph. By applying the algorithm by Henzinger \emph{et al.}~\cite{Henzinger} these problems can be solved in linear time.

\medskip
The \emph{vitality} of an edge $e$ with respect to max-flow measures the max-flow decrement observed when edge $e$ is removed from the graph.
A survey on vitality with respect to max-flow problems can be found in~\cite{AusielloNetworks}.
In the same paper, it is shown that:
\begin{itemize}
\item the vitality of all edges in a general undirected graph can be computed by solving $O(n)$ max-flow instances, thus giving an overall $O(n^{2}m)$ algorithm by applying the $O(mn)$ max-flow algorithms described in~\cite{KingRaoTarjan,Orlin};
\item for $st$-planar graphs (both directed or undirected) the vitality of all edges can be found in optimal $O(n)$ worst-case time. The same result holds for determining the vitality of all vertices, i.e., the max-flow reduction under the removal of all the edges incident on a same vertex;
\item the problem of determining the max-flow vitality of an edge is at least as hard as computing the max-flow for the graph, both for general graphs and for the restricted class of $st$-planar graphs.
\end{itemize}
The vitality problem is left open for general planar graphs, both directed and undirected.

\medskip
In this paper, we address the vitality problem in the case of unweighted undirected planar graphs. We propose a recursive algorithm that computes the vitality of all edges in $O(n\log n)$ worst-case time and $O(n)$ space.

This paper is organised as follows: in Section~\ref{se:defs} we provide some definitions and preliminary considerations, our results are presented in Section~\ref{se:planararcs}, while Section~\ref{se:conclusions} presents some final considerations and open problems.

\section{Definitions and preliminaries}
\label{se:defs}
Given a connected undirected  graph $G=(V,E)$ with $n$ vertices, we denote an edge $e=\{i,j\} \in E$ by the shorthand notation $ij$, and we define $\mbox{dist}(u, v)$ as the minimum number of edges in a path joining vertices $u$ and $v$.

We assume that each edge $e=ij \in E$ has an associated positive \emph{capacity} $c(e)$. 
Let $s \in V$ and $t \in V$, $s\neq t$, be two fixed vertices.
A \emph{feasible flow} in $G$ assigns to each edge $e=ij \in E$ two real values $x_{ij} \in [0,c(e)]$ and $x_{ji} \in [0,c(e)]$ such that:
$$\sum_{j:ij \in E} x_{ij} = \sum_{j:ij \in E} x_{ji}, \mbox{ for each } i \in V \setminus \{s,t\}.$$
The \emph{flow from $s$ to $t$} under a feasible flow assignment $x$ is defined as
$$F(x) = \sum_{j:sj \in E} x_{sj} - \sum_{j:sj \in E} x_{js}.$$
The $\emph{maximum flow}$ from $s$ to $t$ is the maximum value of $F(x)$ over all feasible flow assignments $x$.

An \emph{$st$-cut} is a partition of $V$ into two subsets $S$ and $T$ such that $s \in S$ and $t \in T$. The \emph{capacity of an $st$-cut} is the sum of the capacities of the edges $ij \in E$ 
such that $|S \cap \{i,j\}|=1$ and $|T \cap \{i,j\}|=1$.
The well known Min-Cut Max-Flow theorem \cite{Fulkerson} states that the maximum flow from $s$ to $t$ is equal to the capacity of a minimum $st$-cut for any weighted graph $G$.

\begin{definition}
The \emph{vitality} $\mbox{vit}(e)$ of an edge $e$ with respect to the maximum flow from $s$ to $t$ (or with respect to the minimum $st$-cut), according to the general concept of vitality in \cite{dagstuhl}, is defined as the maximum flow in $G$ minus the maximum flow in $G'=(V,E \setminus \{e\})$.
\end{definition}

The \emph{dual} of a planar embedded undirected graph $G$ is an undirected multigraph $G^*$, whose vertices correspond to faces of $G$ and such that for each edge $e$ in $G$ there is an edge $e^{*} = \{f^{*},g^{*}\}$   in $G^{*}$, where $f^{*}$ and $g^{*}$ are the vertices corresponding to the two faces $f$ and $g$ adjacent to $e$ in $G$. The length $\lungh(e^{*})$ of $e^{*}$ equals the capacity of $e$. 

We are given a path $\pi$ from $a$ to $b$ in a planar embedded graph and a vertex $v$ of $\pi$ different from $a$ and $b$. Fixing an orientation of $\pi$, for example from $a$ to $b$, we can say that every edge $\{v,w\}$ such that $w$ does not belong to $\pi$ \emph{lies to the left } or \emph{lies to the right} of $\pi$.

Given two paths $\pi_{1}, \pi_{2}$ in a planar embedded graph, a \emph{crossing} between $\pi_{1}$ and $\pi_{2}$ is a minimal subpath of $\pi_{1}$ defined by vertices $v_{1}, v_{2}, \ldots, v_{k}$, with $k \geq 3$, such that vertices $v_{2}, \ldots, v_{k-1}$ are contained in $\pi_{2}$, and, fixing an orientation of $\pi_{2}$, edge $\{v_{1}, v_{2}\}$ lies to the  left of 
$\pi_{2}$ and edge $\{v_{k-1}, v_{k}\}$ lies to the right of $\pi_{2}$. We say $\pi_{1}$ \emph{crosses
$\pi_{2}$ $t$ times} if there are $t$ different crossings between  $\pi_{1}$ and $\pi_{2}$. In a similar way, we can define a crossing between a cycle and a path.


\section{Vitality of edges in undirected unweighted planar graphs}
\label{se:planararcs}

We first observe that in graphs with integer edge capacities, the maximum flow value always is an integer number~\cite{Ahuja}, and edge vitality is defined as the difference between two maximum flows; since the vitality of an edge cannot exceed its capacity, in the unweighted case each edge may have vitality at most 1. Therefore we have:

\begin{remark} In an unweighted graph, both directed and undirected, edges can only have vitality 0 or 1.
\end{remark}

\noindent In particular, an edge has vitality 1 if and only if it appears in at least one minimum $st$-cut (i.e., it is essential to achieve maximum flow, for all flow assignments).

\medskip
We propose an $O(n \log n)$ time algorithm for computing the vitaliy of all edges in an unweighted undirected planar graph. Our algorithm is based on Reif's divide and conquer technique.

We assume a planar embedding of the graph is fixed, and we work on the dual graph $G^{*}$ defined by this embedding.
We fix in $G^{*}$ a face $f_{s}^{*}$ adjacent to $s$  and a face $f_{t}^{*}$  adjacent to $t$.
A cycle in the dual graph $G^{*}$ that separates face $f_{s}^{*}$ from face $f_{t}^{*}$ is called an \emph{$st$-separating cycle}.  Moreover, we choose a shortest path $\pi$ in $G^{*}$ from $f_{s}^{*}$ to $f_{t}^{*}$.

\begin{property}[see~\cite{Itai} and Propositions 1 and 2 in~\cite{Reif}]\label{pro:cycle}
An $st$-cut (resp., minimum $st$-cut) in $G$ corresponds to a cycle (resp., shortest cycle) in $G^{*}$ that separates face $f_{s}^{*}$ from face $f_{t}^{*}$.
\end{property}

\begin{property}[see Lemma 4.2 in~\cite{Itai} and Proposition 3 in~\cite{Reif}]\label{pro:crossundir}
Let $G$ be an undirected planar graph, and let $\pi$ be a shortest path from $f_s^{*}$ to $f_t^{*}$ in $G^{*}$: a shortest $st$-separating cycle $\gamma$ exists that crosses $\pi$ exactly once.
\end{property}
Thanks to Property~\ref{pro:crossundir}, the length of a shortest $st$-separating cycle $\gamma$  can be found as
\begin{equation*}
c(\gamma) = \min_{f^{*} \in \pi}\{c(\gamma_{f})\}
\end{equation*}
where $\gamma_{f}$ is a shortest $st$-separating cycle that contains vertex $f^{*}$.

Bearing in mind Property~\ref{pro:cycle}, edges with vitality 1 can be found by determining which edges, in the dual graph, belong to at least one minimum $st$-separating cycle. 

All minimum $st$-separating cycles must cross $\pi$ at least once, 
but minimum $st$-separating cycles might exist that cross $\pi$ many times.
Let $C$ be a multiple-crossing minimum $st$-separating cycle, and let $L $ be the set of vertices that belong to both $C$ and $\pi$.
Consider two vertices $a,b \in L$ such that:
\begin{itemize}
\item  path $C(a,b)$ in $C$ joining $a$ and $b$ does not contain any other vertex in $L$;
\item the union of $C(a,b)$ and the path $\pi(a,b)$ from $a$ to $b$ in $\pi$ is an $st$-separating cycle.
\end{itemize}
It can be seen that such vertices always exist, possibly with $a = b$, otherwise $C$ cannot be an $st$-separating cycle. Figure~\ref{fi:crossing} shows an example of an $st$-separating cycle      
that crosses $\pi$ more than once.

Let $C'(a,b) = C \setminus C(a,b)$:  note that all vertices in $L$ belong to $C'(a,b)$. The following properties hold. 
\begin{property}\label{prop:shortest}
Since $C$ is a minimum $st$-separating cycle, $C'(a,b)$ is a shortest path from $a$ to $b$.
\end{property}
\begin{property}\label{prop:seq}
All vertices in $L$  appear walking on $C'(a,b)$ from $a$ to $b$  in the same order, or the reversal, as they appear walking on $\pi$.
\end{property}
Property~\ref{prop:seq} directly derives from the fact that both $\pi$ and $C'(a,b)$ are shortest paths.
Thus, any minimum $st$-separating cycle crosses $\pi$ as depicted in Figure~\ref{fi:crossing}, where the crossing order along $\pi$ is the same as along $C'(a,b)$, while configurations shown in Figure~\ref{fi:crossingsforbidden} are forbidden for minimum $st$-separating cycles. Moreover, for each pair $x,y$ of vertices in $L$, the portion of $C$ from $x$ to $y$ (avoiding $C(a,b)$) and the portion of $\pi$ from $x$ to $y$ have the same length.

\begin{figure}[t]
\begin{center}
\def\svgwidth{12cm}
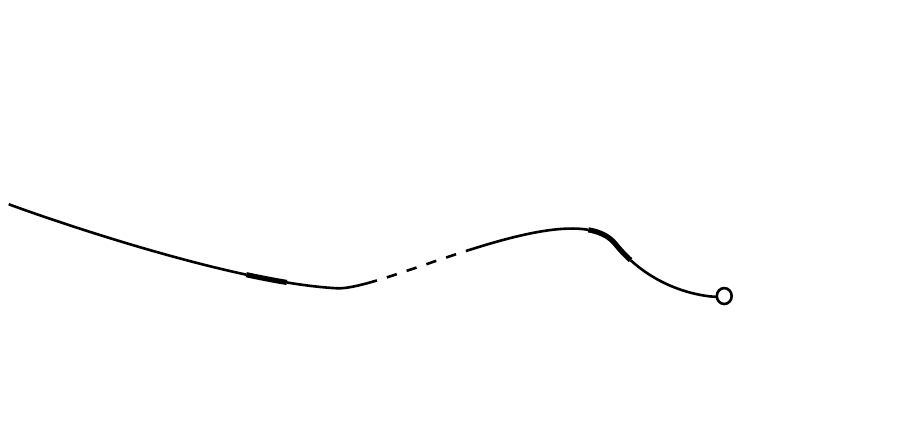
\end{center}
\caption{A multiple-crossing $st$-separating cycle $C$.}\protect\label{fi:crossing}
\end{figure}	

The following lemma shows that, in order to find all edges that belong to at least one minimum $st$-separating cycle, we can restrict our attention to minimum $st$-separating cycles that cross $\pi$ exactly once. Let us call this cycles \emph{single-crossing} minimum $st$-separating cycles.
 
\begin{lemma}\label{le:crossings}
Each edge contained in a minimum $st$-separating cycle is also contained in some single-crossing minimum $st$-separating cycle.
\end{lemma}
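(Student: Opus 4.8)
The plan is to prove something slightly stronger than a mere reduction: for every edge $e$ lying on a minimum $st$-separating cycle I would exhibit a \emph{single-crossing} minimum cycle through $e$ explicitly, so that no induction on the number of crossings is required. Let $C$ be a minimum $st$-separating cycle with $e \in C$. If $C$ already crosses $\pi$ once there is nothing to prove, so assume it crosses $\pi$ several times and let $a$, $b$, $C(a,b)$, $C'(a,b)$ be as introduced before the statement. The first object I would use is $\gamma = C(a,b)\cup\pi(a,b)$. It is an $st$-separating cycle by the choice of $a$ and $b$; it is \emph{minimum} because $\pi(a,b)$, being a subpath of the shortest path $\pi$, has the same length as the shortest path $C'(a,b)$ (Property~\ref{prop:shortest}), so $c(\gamma)=c(C(a,b))+c(\pi(a,b))=c(C(a,b))+c(C'(a,b))=c(C)$; and it is \emph{single-crossing} because $C(a,b)$ meets $\pi$ only at $a$ and $b$, so $\gamma$ shares with $\pi$ the single subpath $\pi(a,b)$, and an $st$-separating cycle meeting $\pi$ in one subpath crosses it exactly once. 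This already settles every edge $e\in C(a,b)$.

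It remains to handle $e\in C'(a,b)$. First I would record, from Property~\ref{prop:seq}, that $a$ and $b$ are the extreme vertices of $L$ along $\pi$: since $a$ is first and $b$ last on $C'(a,b)$ and the two orders agree, every vertex of $L$ lies on $\pi(a,b)$. Write the vertices of $L$ in this common order as $a=w_1,w_2,\dots,w_m=b$. The edge $e$ then lies on a subpath $\sigma$ of $C'(a,b)$ joining two consecutive vertices $w_i,w_{i+1}$ of $L$; having no interior vertex of $L$, $\sigma$ is disjoint from $\pi$ except at its endpoints and therefore lies entirely on one side of $\pi$. I would then form the cycle $\widehat{C}=C(a,b)\cup\pi(a,w_i)\cup\sigma\cup\pi(w_{i+1},b)$, obtained from $\gamma$ by rerouting the stretch $\pi(w_i,w_{i+1})$ through $\sigma$.

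The verification of $\widehat{C}$ has three parts. It contains $e$ by construction. It is minimum because $\sigma$ and $\pi(w_i,w_{i+1})$ are both shortest $w_i$--$w_{i+1}$ paths, so the swap preserves total length and $c(\widehat{C})=c(\gamma)=c(C)$. Finally it is single-crossing: $\widehat{C}$ meets $\pi$ in exactly the two subpaths $\pi(a,w_i)$ and $\pi(w_{i+1},b)$, and at their endpoints one has to compare the side of $\pi$ from which $C(a,b)$ and $\sigma$ depart. From $\gamma$ being single-crossing, the two ends of $C(a,b)$ leave $a$ and $b$ on \emph{opposite} sides, whereas the two ends of $\sigma$ leave $w_i$ and $w_{i+1}$ on the \emph{same} side; a short case check over the two possible sides of $\sigma$ then shows that exactly one of the two shared subpaths is a genuine crossing, so $\widehat{C}$ crosses $\pi$ once and is in particular $st$-separating. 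Since $e\in\sigma\subseteq\widehat{C}$, this finishes the case.

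The step I expect to be the real obstacle is the crossing count for $\widehat{C}$: one must set up ``side of $\pi$'' rigorously and argue that rerouting through $\sigma$ turns exactly one of the two contacts with $\pi$ into a crossing while cancelling the other, which is precisely where the parity of the number of crossings of an $st$-separating cycle is used. The minor points to dispose of are the degenerate case $a=b$ (where $C$ is already single-crossing, or $\pi(a,b)$ collapses to a single vertex) and the check that $\widehat{C}$ is a simple cycle, both of which follow from $\sigma$ and $C(a,b)$ being interior-disjoint from $\pi$ and from each other.
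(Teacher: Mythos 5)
Your construction is genuinely different from the paper's in its organization. The paper replaces the excursions of $C'(a,b)$ off $\pi$ by the corresponding subpaths of $\pi$ in an \emph{alternating} fashion, producing exactly two cycles $C_1$ and $C_2$ that together cover all of $C$; the single-crossing property then comes from the fact that each substitution collapses a consecutive pair of crossings, and parity leaves exactly one. You instead build one dedicated cycle per excursion ($C(a,b)$ plus the single excursion $\sigma$ plus $\pi$ everywhere else), which makes each cycle's interaction with $\pi$ much simpler to analyse (only two contact subpaths) at the cost of producing up to $|L|-1$ cycles instead of two. Both arguments rest on the same exchange principle (Properties~\ref{prop:shortest} and~\ref{prop:seq}: corresponding subpaths of $C'(a,b)$ and of $\pi$ are equal-length shortest paths), and your length and separation bookkeeping for $\gamma$ and $\widehat{C}$ is correct.

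There is, however, one step whose justification as written is wrong, and it is exactly the step you flagged as the obstacle. You claim that because $\sigma$ is disjoint from $\pi$ except at its endpoints it ``lies entirely on one side of $\pi$,'' and hence its two end edges leave $w_i$ and $w_{i+1}$ on the same side. A path does not separate the plane, so internal disjointness alone does not force this: $\sigma$ could leave $w_i$ on the left, wind around an endpoint of $\pi$ (say around $f_s^*$), and return to $w_{i+1}$ from the right. The claim is nevertheless true, but it needs another appeal to the minimality of $C$: if the two end edges of $\sigma$ were on opposite sides of $\pi$, then $\sigma \cup \pi(w_i,w_{i+1})$ would cross $\pi$ exactly once and hence be an $st$-separating cycle of length $2\,c(\sigma)$; since $C$ crosses $\pi$ at least three times, $L$ has at least three vertices, so $\sigma$ is a proper subpath of $C'(a,b)$ and $2\,c(\sigma) < 2\,c(C'(a,b)) \le c(C'(a,b)) + c(C(a,b)) = c(C)$, contradicting the minimality of $C$. (This is the same phenomenon the paper excludes via its ``forbidden configurations.'') With that repair, your case check at the two contacts — $C(a,b)$ departing $a$ and $b$ on opposite sides because $\gamma$ is single-crossing, $\sigma$ departing $w_i$ and $w_{i+1}$ on the same side — does give exactly one crossing for $\widehat{C}$, and the proof goes through.
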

\begin{proof}
Let $C$ be a minimum $st$-separating cycle and let $a$, $b$ and $C'(a,b)$ as defined above. Let $\ell$ be the number of crossings between $C$ and $\pi$.
We observe that being $C$ an $st$-separating cycle then $\ell$ must be odd. 

Suppose that  $C$ crosses $\pi$ more than once, hence $\ell \geq 3$.
Since both $\pi$ and $C'(a,b)$ are shortest paths, as a consequence of Property~\ref{prop:seq}, the crossings appear in the same order both in $\pi$ and in $C'(a,b)$.
Let us assign the indices $1, \ldots, \ell$ to the crossings following the order in which they appear in both paths.
W.l.o.g., we refer to the example in Figure~\ref{fi:crossing}, where vertex $f^{*}_{t}$ is in the finite region defined by $C$. 

For each crossing $i$, $i=1, \ldots, \ell -1$, there is a subpath $C'(u_{i},v_{i})$ of $C'(a,b)$ from $u_{i}$ to $v_{i}$ such that $u_{i}$ and $v_{i}$ are the only vertices belonging also to $\pi$ (see Figure~\ref{fi:crossing}). Let us denote by $\pi(u_{i},v_{i})$ the corresponding subpath of $\pi$. Note that $u_{1}$ might be vertex $a$, $v_{\ell-1}$ might be vertex $b$, and $v_{i}$ might concide with $u_{i+1}$ for some $i$.

We have that both $C'(u_{i},v_{i})$ and $\pi(u_{i},v_{i})$ are shortest paths and then they have the same length, for $i=1, \ldots, \ell -1$. Therefore, the cycle $C_{1}$ obtained from $C$ by substituting $C'(u_{i},v_{i})$ with $\pi(u_{i},v_{i})$ for any $i$ odd, with $1\leq i \le \ell - 2$, is a minimum $st$-separating cycle. Similarly,  the cycle $C_{2}$ obtained from $C$ by substituting $C'(u_{i},v_{i})$ with $\pi(u_{i},v_{i})$ for any $i$ even, with $2\leq i \le \ell - 1$, is a minimum $st$-separating cycle. A general example is shown in Figure~\ref{fi:crossingsplit}.

Both $C_{1}$ and $C_{2}$ are single-crossing minimum $st$-separating cycles and each edge in $C$ is contained in at least one among $C_{1}$ and $C_{2}$. 
\end{proof}

\begin{figure}[t]
\begin{center}
\def\svgwidth{12cm}
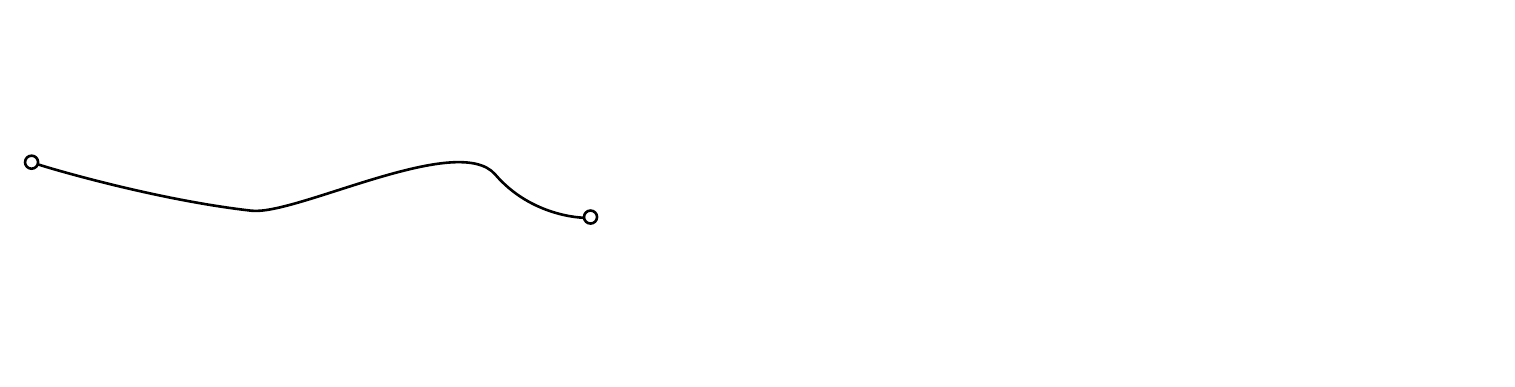
\end{center}
\caption{Forbidden crossings: in both cases, $C$ cannot be a minimum $st$-separating cycle.}\protect\label{fi:crossingsforbidden}
\end{figure}	

\begin{figure}[t]
\begin{center}
\def\svgwidth{12cm}
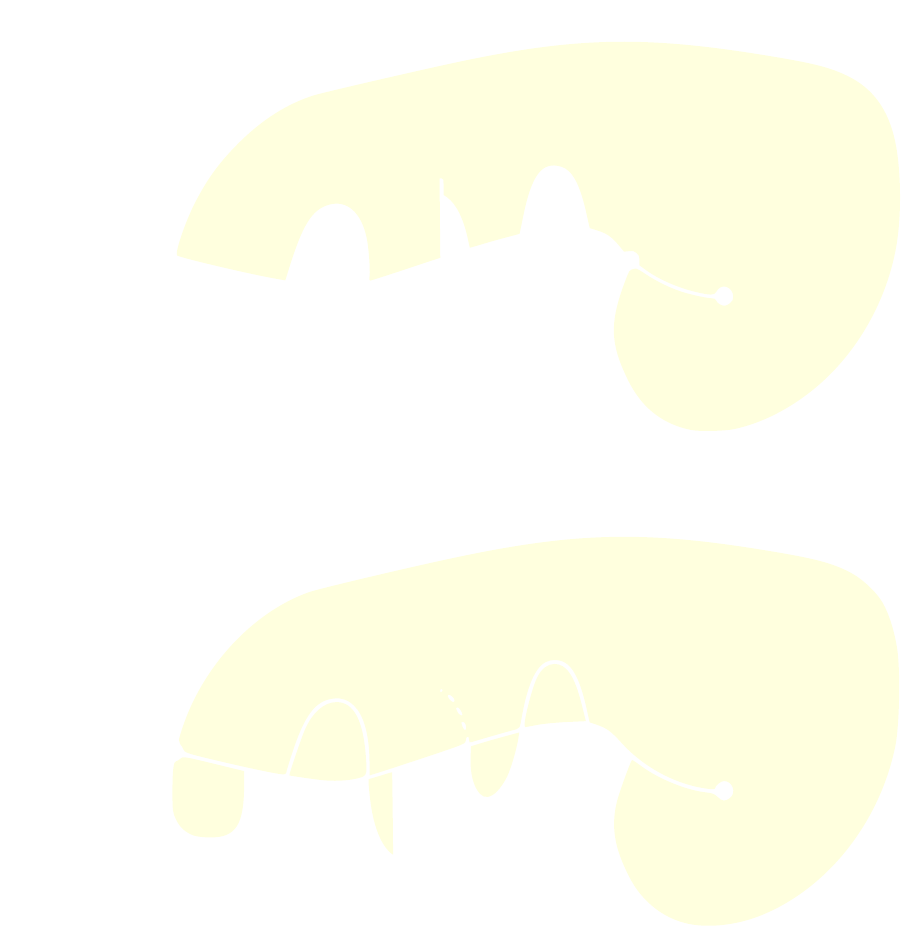
\end{center}
\caption{Two new minimum $st$-separating cycles: $C_{1}$ only crosses $\pi$ around path $v_{\ell}-1$, $b$, while $C_{2}$ only crosses $\pi$ in $a$. Crossings are denoted by black dots and paths. Each edge in $C$ is in $C_{1}$ or $C_{2}$.}\protect\label{fi:crossingsplit}
\end{figure}

In order to find edges that belong to single-crossing minimum $st$-separating cycles, we propose a two-phase approach.
In the first phase, we fix a shortest $s^{*}t^{*}$-path $\pi$ in the dual graph, and find all vertices on $\pi$ that belong to at least one minimum $st$-separating cycle (call $N$ the set of such vertices). In the second phase, beginning with the median vertex $u$ in $N$, with respect to the order in which vertices appear in $\pi$, we find all edges that belong to at least one minimum $st$-separating cycle through $u$, and then proceed recursively, taking as input instances the portion of the dual graph inside the innermost/outside the outermost minimum $st$-separating cycle through $u$, respectively.

\subsection{Overview of Reif's algorithm}

First of all, let us briefly recall Reif's algorithm (see Figure~\ref{fi:reif}). According to his approach, graph $G^{*}$ is ``cut'' along a shortest path $\pi$ from $f_s^{*}$ to $f_t^{*}$, obtaining graph $G^{c}$, in which each vertex $f^{*}$ in $\pi$ is split into two vertices $f^{*}_{u}$ and $f^{*}_{\ell}$. Edges in $G^{*}$ incident on each $f^{*}$  from above $\pi$ are moved to $f^{*}_{u}$ and edges incident on $f^{*}$ from below $\pi$ are moved to $f^{*}_{\ell}$. Edges incident on $f^{*}_{s}$ and $f^{*}_{t}$ are considered above or below $\pi$ on the basis of two dummy edges joining $f^{*}_{s}$ to a dummy vertex inside face $s^{*}$  and  $f^{*}_{t}$ to a dummy vertex inside face $t^{*}$.
Edges on $\pi$ are copied both on $f^{*}_{u}$ and on $f^{*}_{\ell}$, so that path $\pi$ is doubled. Property~\ref{pro:crossundir} shows that, for each vertex $f^*$ in $\pi$, a shortest cycle $\gamma_{f}$ among all the $st$-separating cycles containing $f^{*}$ corresponds to a shortest path in $G^c$ from $f^*_u$ to $f^*_\ell$.
The algorithm finds the median vertex $f^{*}$ in $\pi$ and computes $\gamma_{f}$ by a SSSP algorithm starting from $f^{*}_{u}$, finding a shortest path in $G^c$ from $f^*_u$ to $f^*_\ell$. Cycle $\gamma_{f}$ divides $G^{c}$ in a \emph{left} part $G_{\ell}^{c}$ and a \emph{right} part $G_{r}^{c}$ (assuming $s$ is drawn to the left of $t$), where the boundary $\gamma_{f}$ belongs to both $G_{\ell}^{c}$ and $G_{r}^{c}$. The same technique is applied recursively to $G_{\ell}^{c}$ and $G_{r}^{c}$. This gives rise to a recursion tree with depth $O(\log n)$, where the sum of the sizes of all instances on a same level is $O(n)$, provided that paths containing vertices of degree two 
common to several instances are represented in a compact way, as shown in~\cite{Reif}.  Thus, the overall worst-case time to find the shortest $st$-separating cycle in $G^{*}$ is $O(\mbox{SSSP}(n) \cdot  \log n)$, where $\mbox{SSSP}(n)$ is the time needed to compute a SSSP tree in a planar graph. The complexity of Reif's algorithm becomes $O(n \log n)$ if shortest path trees are computed in $O(n)$ time as in~\cite{Henzinger}.

\begin{figure}[t]
\begin{center}
\def\svgwidth{8cm}
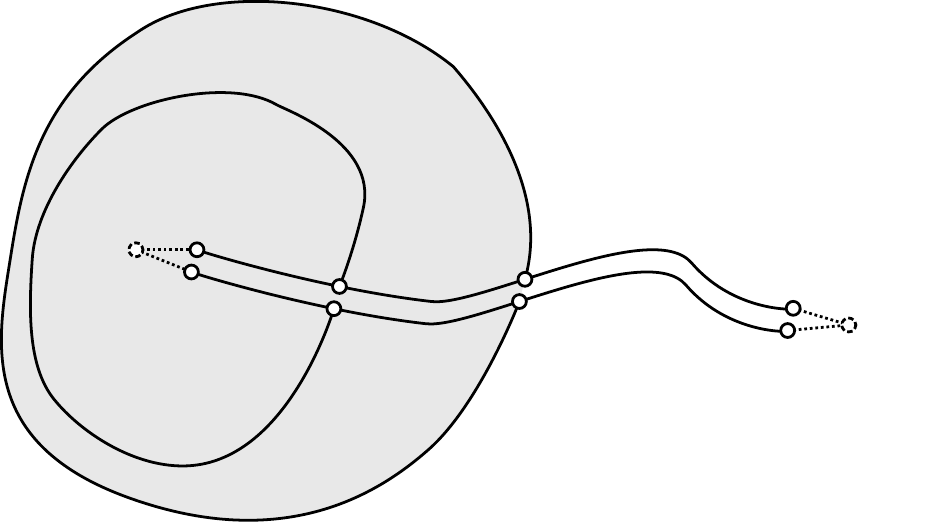
\end{center}
\caption{Reif's divide and conquer technique. Graph $G^{c}$ is split by $\gamma_{f}$ into $G^{c}_{\ell}$ and $G^{c}_{r}$. A shortest $st$-separating cycle $\gamma_{f'}$ can be found in $G^{c}_{\ell}$.}\protect\label{fi:reif}
\end{figure}	

In Section~\ref{subsec:desc_of_algo}, we describe how the divide and conquer algorithm by Reif~\cite{Reif} can be adapted to compute the vitality of all edges in an undirected unweighted planar graph.

\subsection{Description of our algorithm}
\label{subsec:desc_of_algo}
As stated above, our algorithm works in two phases.

In the first phase, we apply Reif's approach.
We choose a shortest $s^{*}t^{*}$-path $\pi$ in the dual graph, and we note that for each vertex $f^{*}$ in $\pi$, Reif's algorithm computes the length of a shortest $st$-separating cycle containing $f^{*}$. Hence it is possible to determine the set of vertices in $\pi$ which belong to at least one minimum $st$-separating cycle.
After duplicating path $\pi$, each vertex $f_{i}^{*}$ is replaced by a pair of vertices $(x_i, y_i)$, and a minimum $st$-separating cycle corresponds to a path joining $x_i$ and $y_i$. Reif's algorithm allows us to compute $\mbox{dist}(x_i, y_i)$, for each $i$. Let $K$ be the set of indices $i$ for which $\mbox{dist}(x_i, y_i)$ is minimum; set $K$ identifies all the vertices in $\pi$ which belong to at least one shortest $st$-separating cycle.

In the second phase, we choose a planar embedding of the dual graph such that $x_i$ and $y_i$ are on the outer face, $\forall i\in K$. For each edge $e^{*}$ in the dual graph, we determine whether it belongs to a shortest path between at least one pair $(x_i, y_i), i\in K$. If this is the case, then $\mbox{vit}(e)=1$, otherwise $\mbox{vit}(e)=0$.
In order to better explain this phase, let us first make the simplifying assumption that there is only one pair $(x_i, y_i)$ at minimum distance (i.e., $K=\{i\}$). We are now going to define the concept of \emph{leftmost} (\emph{rightmost}) BFS visit~\cite{KleinSoda2005, WagnerJComp1997}, which we will use extensively in the following.

\paragraph{Leftmost (rightmost) BFS visit of a planar graph.} Given a planar embedding of the input graph, a leftmost BFS visit will process unvisited neighboring vertices in counterclockwise order, starting from the vertex immediately following (in counterclockwise order) the one from which the current vertex was reached. Analogously, a rightmost BFS visit will process unvisited vertices in clockwise order, starting from the vertex immediately following (in clockwise order) the one from which the current vertex was reached. Let the shortest path between vertices $x_i$ and $y_i$ in a leftmost BFS visit starting from $x_i$ be the \emph{leftmost} $(x_i, y_i)$-shortest path (designated as $\pi^\ell_i$). We analogously define $\pi^r_i$ as the \emph{rightmost} $(x_i, y_i)$-shortest path starting from $x_i$.

\medskip
\begin{figure}[t]
\begin{center}
\def\svgwidth{8.5cm}
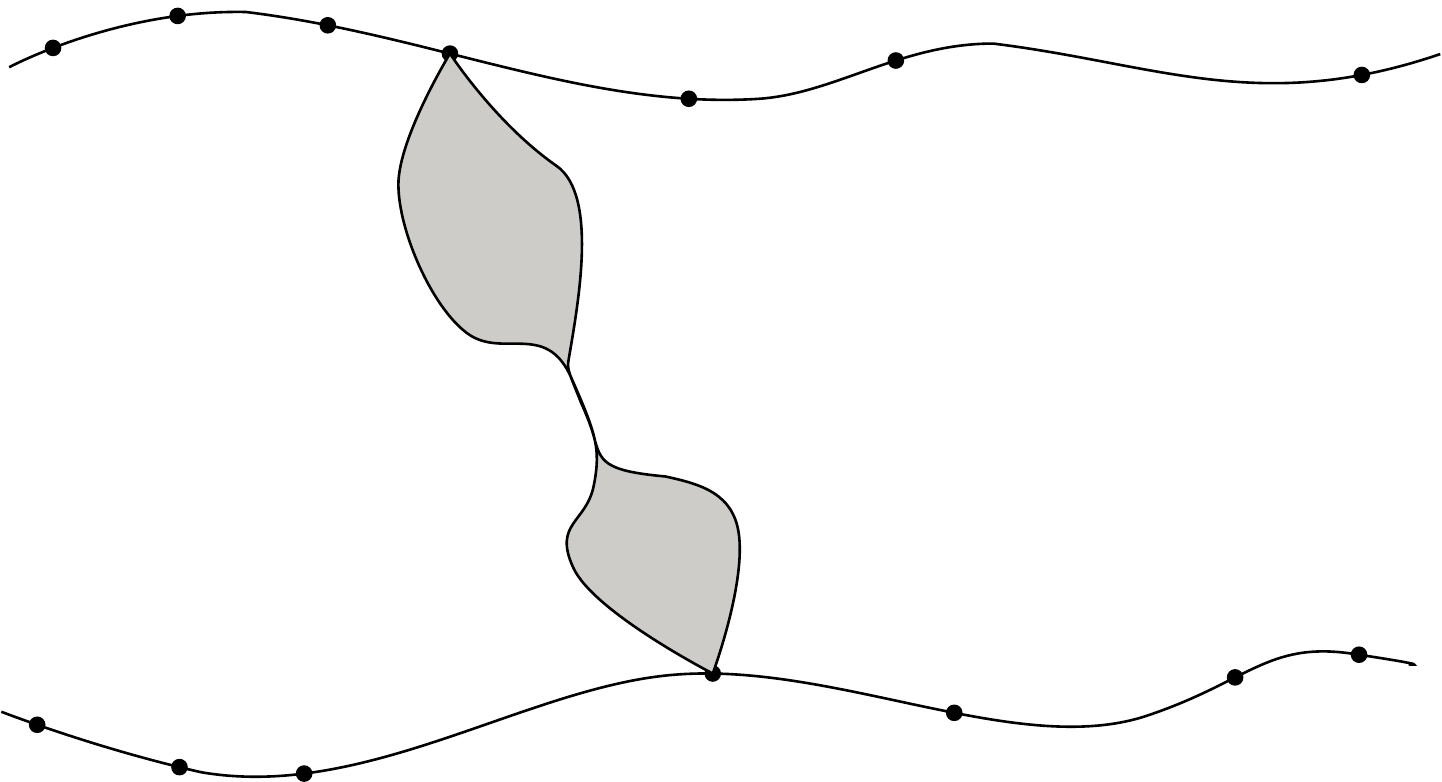
\end{center}
\caption{The leftmost and rightmost shortest paths $\pi_i^{\ell}$ and $\pi_i^r$ define the three subgraphs $G_i^{\ell}$, $\Pi_i$ and $G_i^r$. Paths $\pi_i^\ell$ and $\pi_i^r$     may share vertices and/or edges.}\protect\label{fig:losanga}
\end{figure}	

We designate the subgraph induced by vertices contained in the region between $\pi^\ell_i$ and $\pi^r_i$ (included) as $\Pi_i$. An example is shown in Figure~\ref{fig:losanga}.
Since any $(x_i,y_i)$-shortest path lies to the right of $\pi^\ell_i$ and to the left of  $\pi^r_i$, all $(x_i,y_i)$-shortest path are contained in $\Pi_i$.
We now highlight the following:

\medskip

\begin{remark}
\label{rem:bfs}
For each edge $(u,v)\in \Pi_i$, in order to determine whether it belongs to a shortest path between $x_i$ and $y_i$, it suffices to check if either $\mbox{dist}(u, x_i)+\mbox{dist}(v, y_i)+1=\mbox{dist}(x_i, y_i)$ or $\mbox{dist}(u, y_i)+\mbox{dist}(v, x_i)+1=\mbox{dist}(x_i, y_i)$.
\end{remark}

\medskip

The above remark suggests how $\Pi_i$ should be processed. We run two BFS visits, from $x_i$ and from $y_i$ respectively, and then, for each edge $e \in \Pi_i$, we check if one of the above conditions holds. If this is the case then $\mbox{vit}(e)=1$, otherwise $\mbox{vit}(e)=0$.

Let us now lift the above simplifying assumption, and assume that $|N|>1$. In this case, let $(x_i, y_i)$ be the vertices corresponding to the median vertex in $N$ along $\pi$, and proceed as above. The leftmost and the rightmost shortest paths between $x_i$ and $y_i$ divide the graph into three regions: the subgraph induced by vertices left of $\pi^\ell_i$, including $\pi^\ell_i$ (call it $G^\ell_i$), the subgraph induced by vertices right of $\pi^r_i$, including $\pi^r_i$ (call it $G^r_i$), and the subgraph $\Pi_i$ contained between $\pi^\ell_i$ and $\pi^r_i$, including $\pi^\ell_i$ and $\pi^r_i$ themselves. See Figure~\ref{fig:losanga} for an example.

$\Pi_i$ can be processed according to Remark~\ref{rem:bfs}. The algorithm then proceeds recursively, with $G^\ell_i$ and $G^r_i$ as input instances. We will now show the correctness of this approach.
More in detail, we show the following result:
\begin{lemma}
\label{lemma-decomposition}
Let $i$ be any index in $K$:
\begin{enumerate}[(i)]
\item\label{statement_1} shortest paths between $x_j$ and $y_j$, with $j>i$, can only be in $G^r_i \cup \Pi_i$;
\item\label{statement_2}  for each pair of vertices in $G^r_i$, their distance in $G^r_i \cup \Pi_i$ is the same as their distance in $G^r_i$;
\item\label{statement_3} portions of $(x_j, y_j)$-shortest paths in $\Pi_i$ are also in some $(x_i, y_i)$-shortest paths.
\end{enumerate} 
\end{lemma}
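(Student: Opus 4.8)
The plan is to lean throughout on two elementary facts: every subpath of a shortest path is itself shortest, and a \emph{leftmost} (resp.\ \emph{rightmost}) shortest path is extremal, in the sense that no shortest path between the same two endpoints runs strictly to its left (resp.\ right). Before attacking the three statements I would first pin down the cyclic order of the terminals on the outer face. Reif's cut doubles $\pi$ so that the ``upper'' copies $x_k$ appear along the outer boundary in order of increasing index and the ``lower'' copies $y_k$ in order of decreasing index; hence for $i<j$ the pair $(x_j,y_j)$ is \emph{nested} inside $(x_i,y_i)$, the cyclic order being $x_i,\dots,x_j,\dots,y_j,\dots,y_i$. The arc carrying the indices larger than $i$ is exactly the outer boundary of $G^r_i$ (consistent with $s$ drawn to the left of $t$), so the first thing I would record is that $x_j,y_j\in G^r_i$ for every $j>i$. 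This single observation makes all three statements fall out of the same rerouting idea.

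For statement~(\ref{statement_1}) I would argue by contradiction using the extremality of $\pi^\ell_i$. Suppose some shortest $(x_j,y_j)$-path $P$ enters the interior of $G^\ell_i$. Since $x_j,y_j\in G^r_i$ and $\pi^\ell_i$ is the boundary between $G^\ell_i$ and $\Pi_i$, the excursion of $P$ into $G^\ell_i$ must enter and leave $\pi^\ell_i$ at two vertices $a,b$ that, by planarity, occur in this order along $\pi^\ell_i$. Splicing $P[a,b]$ into $\pi^\ell_i$ in place of $\pi^\ell_i[a,b]$ yields a walk from $x_i$ to $y_i$ of length $\mbox{dist}(x_i,a)+\mbox{dist}(a,b)+\mbox{dist}(b,y_i)=\mbox{dist}(x_i,y_i)$, because $a,b$ lie in order on the shortest path $\pi^\ell_i$ and $|P[a,b]|=\mbox{dist}(a,b)$. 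Thus we obtain a shortest $(x_i,y_i)$-path running strictly to the left of $\pi^\ell_i$, contradicting the fact that $\pi^\ell_i$ is leftmost. Hence no shortest $(x_j,y_j)$-path leaves $G^r_i\cup\Pi_i$.

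Statement~(\ref{statement_2}) is a pure rerouting argument. One inequality is immediate since $G^r_i\subseteq G^r_i\cup\Pi_i$. For the converse, take a shortest $u$--$v$ path in $G^r_i\cup\Pi_i$ with $u,v\in G^r_i$; every maximal subpath that wanders into $\Pi_i$ enters and leaves through the common boundary $\pi^r_i$ at vertices $p,q$, and replacing it by $\pi^r_i[p,q]$ keeps the length unchanged (a subpath of the shortest path $\pi^r_i$) while confining the path to $G^r_i$. This gives $\mbox{dist}_{G^r_i}(u,v)\le\mbox{dist}_{G^r_i\cup\Pi_i}(u,v)$, and hence equality. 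The same replacement idea proves statement~(\ref{statement_3}): by~(\ref{statement_1}) a shortest $(x_j,y_j)$-path stays in $G^r_i\cup\Pi_i$, and since its endpoints lie in $G^r_i$, each maximal portion it spends inside $\Pi_i$ has both endpoints $p,q$ on $\pi^r_i$; splicing that portion into $\pi^r_i$ produces a shortest $(x_i,y_i)$-path containing it, so every $\Pi_i$-portion of a $(x_j,y_j)$-shortest path already lies on some $(x_i,y_i)$-shortest path.

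The step I expect to be delicate is not any single algebraic identity but the planarity bookkeeping underlying the nesting claim and the ordering of the crossing points $a,b$ (resp.\ $p,q$): I must be sure that the endpoints of each excursion land on the correct one of $\pi^\ell_i,\pi^r_i$ and occur in a consistent order, so that each spliced object is a genuine simple shortest path and, in~(\ref{statement_1}), truly witnesses a leftward shortest $(x_i,y_i)$-path. Once the outer-face order and the Jordan-curve separation of $G^\ell_i$, $\Pi_i$, $G^r_i$ are set up carefully, the three claims reduce to the single observation that a detour across a shortest boundary path can be flattened onto that path without changing its length.
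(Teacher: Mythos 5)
Your proof is correct and follows essentially the same route as the paper's: all three claims are reduced to rerouting excursions across the boundary shortest paths $\pi^\ell_i$ and $\pi^r_i$, using the facts that subpaths of shortest paths are shortest and that the leftmost (rightmost) shortest path is extremal. If anything, your splicing argument for part~(iii) is more explicit than the paper's, which only remarks that the replaced subpath, being a shortest path between two vertices of $\Pi_i$, was already examined at an earlier stage of the recursion.
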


\begin{proof}
Let $(x_j, y_j)$ be any pair in instance $G^r_i$. We first observe that all $(x_j, y_j)$--shortest paths must be contained in $G^r_i \cup \Pi_i$, otherwise $\pi^\ell_i$ would not be the leftmost $(x_i, y_i)$--shortest path. This proves~(\ref{statement_1}).

Let us now assume that a portion of path $\pi_j$ is contained in $\Pi_i$ (see Figure~\ref{fig:decomposition}). Let $u$ be the first vertex of $\pi_j$ in $\Pi_i$, and let $v$ be the last. Clearly, the $\pi^{uv}_j$ subpath of $\pi_j$ can be safely replaced by the $uv$ subpath of $\pi^r_i$ without affecting distances for all vertices of $\pi_j$ in $G^r_i$.
This proves~(\ref{statement_2}).

\begin{figure}[t]
\begin{center}
\def\svgwidth{12.5cm}
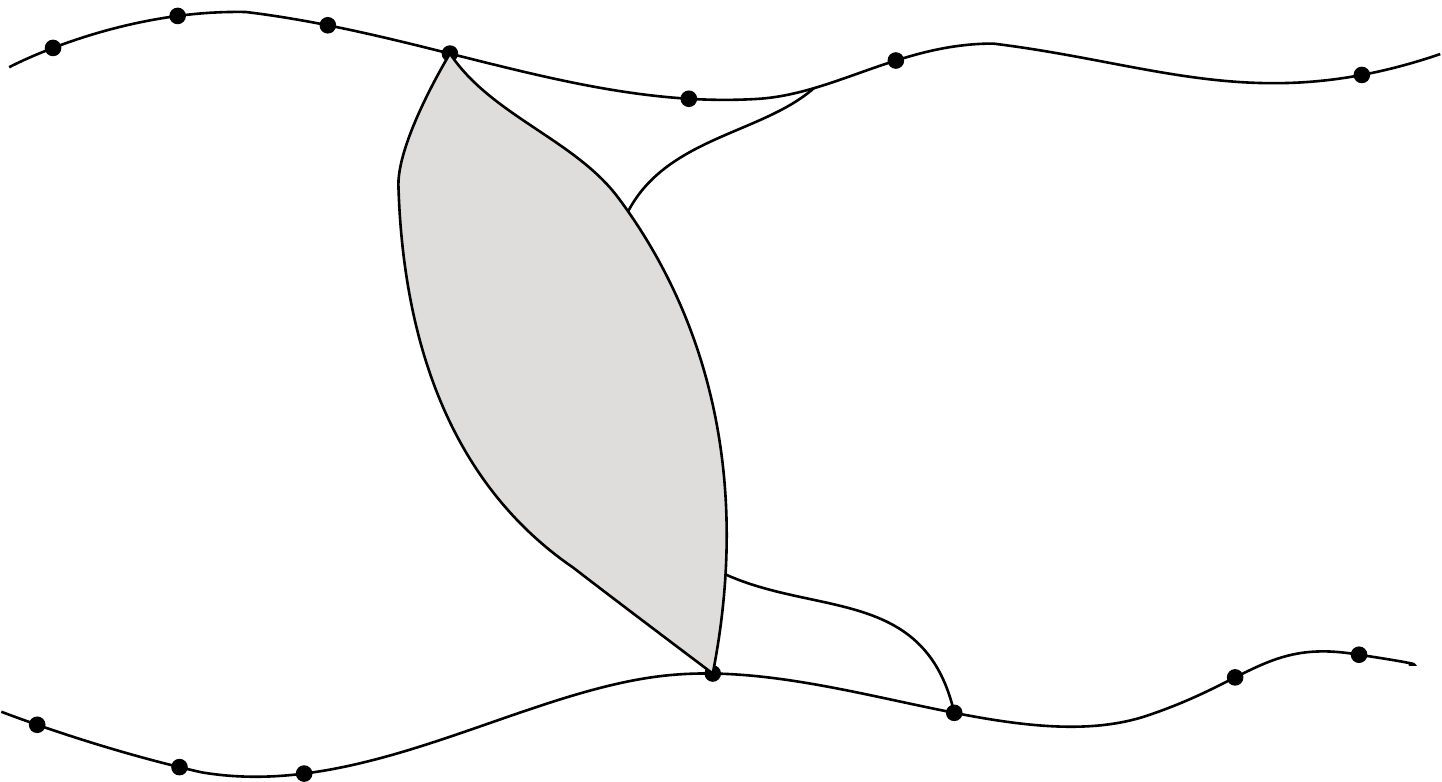
\end{center}
\caption{If a shortest path from $x_j$ to $y_j$, with $j > i$, crosses $\pi_i^r$ then another shortest path from $x_j$ to $y_j$ exists in $G_{i}^{r}$.}\protect\label{fig:decomposition}
\end{figure}	

Finally, we observe that $\pi^{uv}_j$, the subpath we replaced, being a shortest path between two vertices in $\Pi_i$, must have already been examined, with respect to the vitality of its component edges, in a previous step of the recursion. This proves~(\ref{statement_3}).
\end{proof}

Besides the subgraph it will process, each recursive call receives as input the subset of pairs of vertices on the outer face between which to compute shortest paths. We take as the base case of the recursion any instance whose input only contains one or two such pairs, analogously to what is done in~\cite[p. 615]{Hassin85}. In this case, we apply the process described in Remark~\ref{rem:bfs} to each pair.

\subsection{Time and space bounds} We now proceed to bound time and space requirements for our algorithm.

The first phase closesly parallels Reif's algorithm~\cite{Reif}, whose running time, 
as pointed out in the introduction, can be reduced to $O(n \log n)$ by applying the SSSP tree algorithm by Henzinger \emph{et al.}~\cite{Henzinger}. Differently from Reif, however, we are solving the problem for unweighted graphs, which leads to an $O(n \log n)$ running time for the first phase even without resorting to the above cited sophisticated SSSP tree algorithm.

In order to bound the running time of the second phase, we must compute the total size of all subinstances at each level of recursion. Every subinstance $\chi$ (let us call subinstance $\chi$ the instance whose median pair is $(x_\chi, y_\chi)$) is delimited on the left by a rightmost shortest path, say $\pi^r_i$, and on the right by a leftmost shortest path, say $\pi^\ell_j$, with $i<\chi<j$ (we assume that there are two dummy paths of minimum length to the left and to the right of the extremal subinstances). The following analysis runs along the same lines as Theorem 5 in~\cite{Reif}. Because of the way the graph is split into $\Pi_i$, $G^\ell_i$ and $G^r_i$, each edge belongs to at most two instances on the same level of recursion, unless it is part of a ``degenerate'' frontier: degenerate frontiers occur when $\pi^r_i$ and $\pi^\ell_j$ share some edges, these edges must all be consecutive, because frontier paths are leftmost and rightmost, respectively.
This degenerate frontier, consisting of a path of vertices all of degree 2 in the current instance, will be represented as a single appropriately weighted edge (see Figure~\ref{fig:degenere}).

\begin{figure}[t]
\begin{center}
\def\svgwidth{8.5cm}
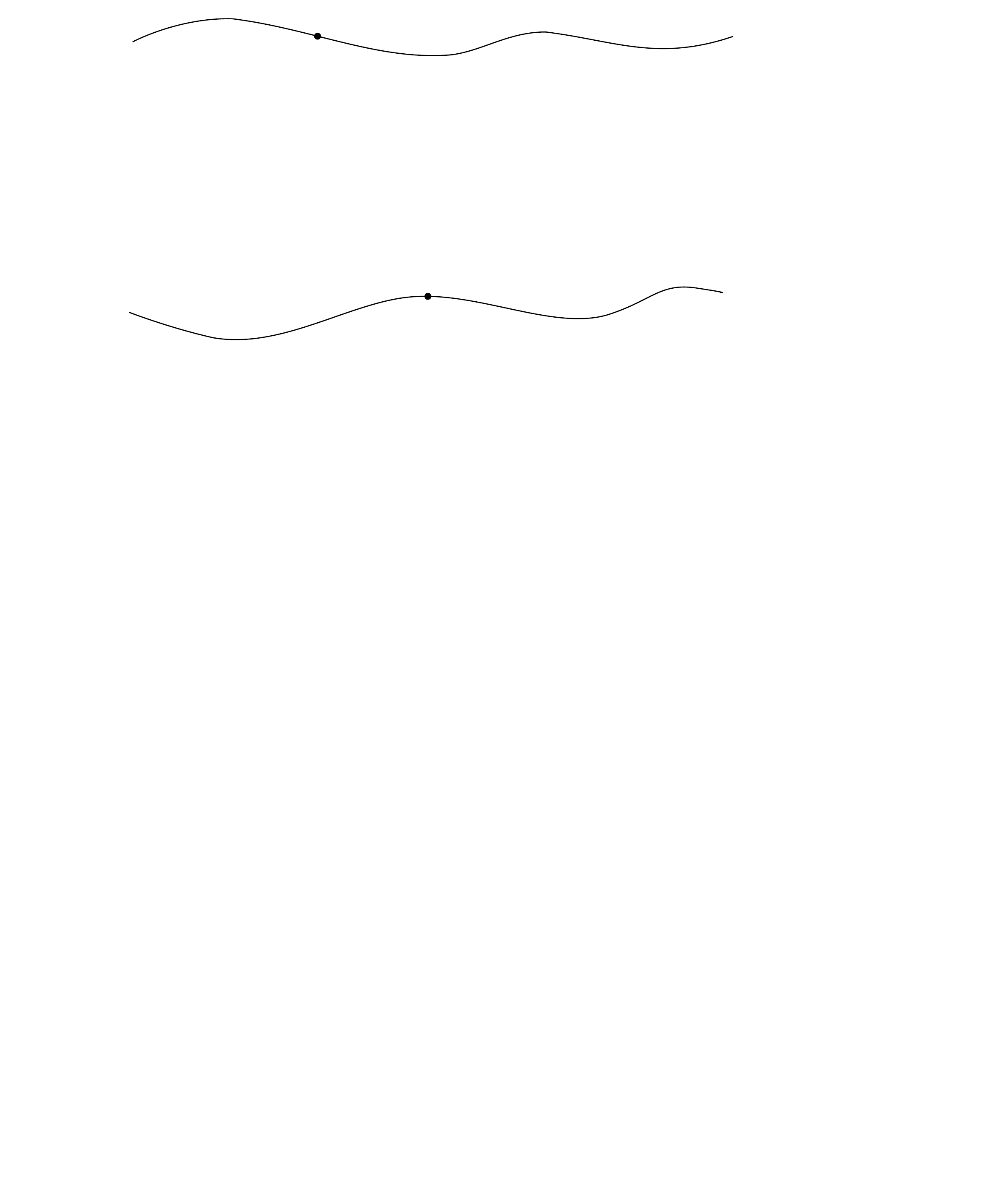
\end{center}
\caption{A degenerate instance $G_{k}$, bounded by a rightmost shortest path $\pi_i^r$ and a leftmost shortest path $\pi_j^\ell$.}\protect\label{fig:degenere}
\end{figure}	

At level $k$ of the recursion, therefore, a total of $O(m)+O(2^k)$ edges will be processed by our algorithm. The first term accounts for unduplicated edges, plus simple frontier edges that are duplicated only once. Since the graph is planar, this term can be reduced to $O(n)$. The second term accounts for the weighted edges resulting from degenerate frontiers between instances, since each instance can contain at most one such edge and there are at most $2^k$ instances at level $k$. Since the levels of recursion are $k \leq \log_{2} n$, it follows that $2^k \leq n$. Therefore the total number of processed edges at each level is $O(n)$.

The processing performed by each recursive call consists of two BFS visits (leftmost and rightmost), which take time linear in the number of edges, even when taking into account the degenerate frontier weighted edges (at most one per instance), plus the verification of whether each edge is on a shortest path or not, as described in Remark~\ref{rem:bfs}, which requires constant time per edge. For base case instances, we only need to perform a BFS visit starting from each vertex on the outer face. Therefore the second phase of our algorithm runs in $O(n)$ time per level of recursion, and thus in $O(n\log n)$ total time.

If each recursive call were provided a copy of the relevant portion of the original graph as input, the space requirement of our algorithm would be $O(n\log n)$. However, this bound can be reduced to $O(n)$ by observing that all recursive calls can run on the original graph, provided that, before calling the procedure on left and right subinstances, edges to be ignored are properly marked, and they are unmarked when the call returns.

The above discussion leads to the following theorem:

\begin{theorem}
Given an unweighted undirected planar graph with $n$ vertices, the vitality of all edges with respect to the max-flow between two fixed vertices $s$ and $t$, can be computed in $O(n \log n)$ worst-case time and $O(n)$ worst-case space.
\end{theorem}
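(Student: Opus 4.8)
The plan is to assemble the structural results established above into a correctness argument and then verify the claimed resource bounds. First I would fix the characterisation of vitality: by Remark~1 every edge $e$ has $\mbox{vit}(e)\in\{0,1\}$, with $\mbox{vit}(e)=1$ precisely when $e$ lies in some minimum $st$-cut. By Property~\ref{pro:cycle} this is equivalent to $e^{*}$ belonging to some shortest $st$-separating cycle in $G^{*}$, and by Lemma~\ref{le:crossings} it suffices to test membership in \emph{single-crossing} minimum $st$-separating cycles only. Hence the entire problem reduces to computing the set of dual edges lying on at least one single-crossing shortest $st$-separating cycle.

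Next I would argue that the two-phase algorithm computes exactly this set. For the first phase, I would invoke Property~\ref{pro:crossundir}: a shortest single-crossing $st$-separating cycle through $f^{*}$ corresponds to a shortest $x_i$--$y_i$ path in the cut graph $G^{c}$, so running Reif's divide-and-conquer yields $\mbox{dist}(x_i,y_i)$ for every $i$, from which the set $K$ of minimum-distance indices is read off. For the second phase, I would show by induction on the recursion that every edge on some $(x_i,y_i)$-shortest path, for every $i\in K$, is detected. The basis is Remark~\ref{rem:bfs}: after two BFS visits from $x_i$ and from $y_i$ inside $\Pi_i$, shortest-path membership of each edge of $\Pi_i$ is decided in $O(1)$ time. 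The inductive step is Lemma~\ref{lemma-decomposition}: part~(\ref{statement_1}) confines all pairs with $j>i$ to $G^{r}_i\cup\Pi_i$ (and symmetrically $j<i$ to $G^{\ell}_i$), part~(\ref{statement_2}) guarantees that passing to the subinstance $G^{r}_i$ preserves all relevant distances, and part~(\ref{statement_3}) ensures that any fragment of a later shortest path dipping into $\Pi_i$ has already been accounted for while processing $(x_i,y_i)$. Together these give that the recursion on $G^{\ell}_i$ and $G^{r}_i$ misses no vitality-$1$ edge and introduces no false positives.

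For the complexity I would bound the work level by level. The first phase inherits Reif's $O(\mbox{SSSP}(n)\cdot\log n)$ bound, which is $O(n\log n)$; in the unweighted setting ordinary BFS already suffices. For the second phase, the key estimate is that at recursion level $k$ the total number of processed edges is $O(n)$: the $\Pi_i$/$G^{\ell}_i$/$G^{r}_i$ split puts each edge in at most two sibling subinstances, and each of the at most $2^{k}\le n$ instances contributes at most one extra weighted edge arising from a degenerate (shared, all-degree-$2$) frontier, represented compactly. Since each recursive call performs only two BFS visits plus $O(1)$-per-edge checks, each of the $O(\log n)$ levels costs $O(n)$, giving $O(n\log n)$ total time. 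The $O(n)$ space bound then follows by running all calls on the single original graph and marking, then unmarking, the ignored edges across recursive calls rather than copying subinstances.

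The step I expect to be most delicate is the amortised edge-count argument for the second phase, specifically the treatment of degenerate frontiers. Guaranteeing that a shared segment of a leftmost and a rightmost path consists entirely of consecutive degree-$2$ vertices --- so that it can be collapsed into a single weighted edge without disturbing later BFS distances and while keeping the bound of one extra edge per instance --- is exactly what keeps the per-level size at $O(n)$ rather than $O(n\log n)$. Making this compact representation interact correctly with the marking scheme underlying the $O(n)$ space bound is, in my view, the real crux of the argument.
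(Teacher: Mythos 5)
Your proposal is correct and follows essentially the same route as the paper: reduce vitality to membership in a single-crossing minimum $st$-separating cycle via Remark~1, Property~\ref{pro:cycle} and Lemma~\ref{le:crossings}, run Reif's divide and conquer to obtain $K$, then recurse on the $\Pi_i$/$G^\ell_i$/$G^r_i$ decomposition using Remark~\ref{rem:bfs} and Lemma~\ref{lemma-decomposition}, with the same per-level $O(n)$ accounting (including the degenerate-frontier weighted edges) and the same mark/unmark trick for $O(n)$ space. You have also correctly identified the compact representation of degenerate frontiers as the delicate point of the size analysis, which is exactly where the paper leans on the argument of Theorem~5 in~\cite{Reif}.
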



\section{Conclusions and further work}
\label{se:conclusions}

In this paper we have shown how to efficiently compute the vitality of all edges with respect to max-flow, in the case of undirected unweighted planar graphs. Note that our result also holds for uniform edge weights. In a previous work~\cite{AusielloNetworks}, the same authors presented algorithms for solving the vitality problem for all edges, in the case of general undirected graphs, and for all arcs and nodes, plus contiguous arc sets, for $st$-planar graphs, both directed and undirected.

Several points remain open, most notably the case of planar weighted undirected graphs, for which it is not sufficient to look for minimum $st$-separating cycles in the dual graph, hence it is not possible to rely on the divide and conquer technique by Reif. The technique described in this paper cannot be applied to planar directed graphs either, even in the unweighted case, due to the fact that it is not always possible to avoid crossings between shortest $st$-separating cycles.

\section*{Acknowledgements}
The authors are grateful to an anonymous referee for their valuable suggestions.

\bibliographystyle{plain}
\bibliography{vital}

\end{document}